\begin{document}
\title{On the near-tightness of \(\chi \leq 2r\): a general \(\sigma\)-ary construction and a binary case via LFSRs}
\titlerunning{Near-tight \(\chi \leq 2r\): a general \(\sigma\)-ary construction; binary case via LFSRs}
% If the paper title is too long for the running head, you can set
% an abbreviated paper title here
%
%\author{First Author}
%\authorrunning{F. Author}

\author{Vinicius T.~V.~Date\inst{1} \and Leandro M.~Zatesko\inst{2, 1}}
\authorrunning{V.~T.~V.~Date and L.~M.~Zatesko}

\institute{Postgraduate Program in Informatics, Federal University of Paraná \and Federal University of Technology -- Paraná\\
\email{vtvdate@inf.ufpr.br, zatesko@utfpr.edu.br}}
%\author{First Author\inst{1}\orcidID{0000-1111-2222-3333} \and Second Author\inst{2,3}\orcidID{1111-2222-3333-4444} }
%
% First names are abbreviated in the running head.
% If there are more than two authors, 'et al.' is used.
%
%\institute{Princeton University, Princeton NJ 08544, USA \and
%Springer Heidelberg, Tiergartenstr. 17, 69121 Heidelberg, Germany
%\email{lncs@springer.com}\\
%\url{http://www.springer.com/gp/computer-science/lncs} \and
%ABC Institute, Rupert-Karls-University Heidelberg, Heidelberg, Germany\\
%\email{\{abc,lncs\}@uni-heidelberg.de}}

\maketitle              % typeset the header of the contribution
\begin{abstract}
In the field of compressed string indexes, recent work has introduced suffixient sets and their corresponding repetitiveness measure $\chi$. In particular, researchers have explored its relationship to other repetitiveness measures, notably \(r\), the number of runs in the Burrows--Wheeler Transform (BWT) of a string. Navarro et al.~(2025) proved that \(\chi \leq 2r\) \cite{navarro2025smallest}, although empirical results by Cenzato et al.~(2024) \cite{cenzato2024computing} suggest that this bound is loose, with real data bounding \(\chi\) by around \(1.13r\) to \(1.33r\) when the size of the alphabet is \(\sigma = 4\). To better understand this gap, we present two cases for the asymptotic tightness of the \(\chi \leq 2r\) bound: a general construction for arbitrary \(\sigma\) values, and a binary alphabet case, consisting of de Bruijn sequences constructed by linear-feedback shift registers (\mbox{LFSRs}) from primitive polynomials over \(\mathbb{F}_2\). The second is a novel characterization of which de Bruijn sequences achieve the literature run-minimal pattern for the cyclic BWT. Moreover, we show that de Bruijn sequences fail to close the gap for \(\sigma \geq 3\).

\keywords{Suffixient set \and Burrows--Wheeler transform \and De Bruijn sequence \and Linear-feedback shift register.}
\end{abstract}
\section{Introduction}
Repetitiveness measures are linked to compressed indexes. As such, the relationship between different measures is of interest. A recent survey \cite{navarro2021indexing} explores a range of measures and how they are bounded by one another. These relationships between the measures allow us to select the appropriate structure for a given use case.

We are interested in the recently introduced measure \(\chi\) \cite{depuydt2023suffixient}, which is the size of a smallest suffixient set (see technical definitions in Sec.~\ref{sec:prelim}).
We pay close attention to its relationship with \(r\), the number of runs in the Burrows--Wheeler Transform (BWT). A very recent work on suffixient sets established the inequality \(\chi \leq 2r\) (for both the BWT of forward and reverse strings)~\cite{depuydt2023suffixient} \cite{navarro2025smallest}. The bound proposes a factor of $2$, though empirical data for genomes (strings with alphabet size $\sigma =4$) have suggested a lower constant, bounding \(\chi/r\) by around \(1.13\) to \(1.33\) ~\cite{cenzato2024computing}.

Given this discrepancy, we construct a general case for arbitrary alphabet size \(\sigma\) that achieves the ratio $2\sigma/(\sigma+1)$, which approaches the theoretical ratio of $2$ when \(\sigma\) increases. 
On the other hand, for \(\sigma = 2\) (a case of great interest for digital data), we present a run-minimal family using de Bruijn sequences that also approaches the literature bound as the string length increases. This construction matches the run-minimal BWT pattern of \cite{mantaci2017burrows} and is obtained via linear-feedback shift registers (LFSRs) induced by primitive polynomials of the form \(x^k + x + 1\) over \(\mathbb{F}_2\), for \(k\geq 2\).

In Sec.~\ref{sec:prelim} we define the structures and conventions used in the text. In Sec.~\ref{sec:general_family} we present the general \(\sigma\) construction. In Sec.~\ref{sec:binary_family} we present the construction for the binary alphabet. In Sec.~\ref{sec:sigma_geq_two} we explain what happens if we try to generalize the method from the previous section. Finally, in Sec.~\ref{sec:conclusion} we summarize the findings and state open problems and conjectures.

\section{Preliminaries}\label{sec:prelim}

Let \(\Sigma\) be a finite ordered alphabet and let \(\sigma\) denote its size. A string \(w\) over the alphabet \(\Sigma\) is a finite concatenation of characters of \(\Sigma\) and its indexing is \(0\)-based, i.e~\(w \in \Sigma^*\), whose length is denoted \(|w|\); a word of length \(0\) is the \emph{empty} word \(\epsilon\). We also define the \emph{reverse} of \(w\) by \(w^{\mathrm{rev}}[i] := w[|w|-i-1]\) for \(0 \leq i \leq |w|-1\). Concatenation is by juxtaposition: we juxtapose both \(c \in \Sigma\) and \(w \in \Sigma^*\). We define \([i..j]\) as an interval, with its indices being inclusive.

A cycle \(C\) of length \(n\) is a word \(w\) of length \(n\) whose indices are taken modulo \(n\). For a cut position \(c\in [0..n-1]\), the \emph{rotation} of \(C\) by \(c\) is the word \(U\in\Sigma^{*}\) defined by \(U[i]=w[(i+c)\bmod n]\) for \(0 \leq i \leq n-1\).

When needed, we will explicitly terminate a string \(w\) with the symbol \(\$ \notin \Sigma\), \(\$ < a\) for all \(a \in \Sigma\), and \(\$\) will appear once at the end of the string; the resulting representation is explicitly \(w\$\). 

For a string \(w\$\) of length \(n = |w\$|\), the suffix array \(\operatorname{SA}[0..n-1]\) is the permutation of \(\{0,1,\ldots, n-1\}\) such that, for all \(0 \leq i < j < n-1\) 
\[
    w\$[\operatorname{SA}[i]..n-1] < w\$[\operatorname{SA}[j]..n-1] \,,
\] in lexicographic order.

\subsection{Burrows--Wheeler transform}

For a terminated string \(w\$\) of length \(n = |w\$|\), we define the (terminated) Burrows--Wheeler Transform~\cite{burrows1994} of \(w\$\) as follows. The \(\operatorname{BWT}(w\$)\) is the last column \(L\) of the \(n \times n\) matrix whose rows are all cyclic rotations of \(w\$\), sorted lexicographically. Equivalently, let \(\operatorname{SA}[0..n-1]\) be the suffix array of \(w\$\), then
\[
    \operatorname{BWT}(w\$)[i] \ :=\  w\$[(\operatorname{SA}[i]-1)\bmod n]
\]
for \(0\leq i\leq n-1\). We will use both characterizations of the \(\operatorname{BWT}\) throughout the text.

    For strings \(w \in \Sigma^*\) without an explicit endmarker, we use the \emph{circular} Burrows--Wheeler transform, denoted \(\operatorname{cBWT}(w)\). It is defined analogously: consider the \(n\) cyclic rotations of \(w\), sort them lexicographically, and let \(\operatorname{cBWT}(w)\) be the last column of this rotations matrix. 
    
    After either BWT is applied, we can count the number of runs in the resulting string \(L\). A \emph{run} is a maximal equal-letter substring of \(L\). We denote:
    \begin{itemize}
        \item \(r_c\) as the number of runs in  \(\operatorname{cBWT}(w)\);
        \item \(r\) as the number of runs in  \(\operatorname{BWT}(w\$)\); and
        \item \(\bar{r}\) as the number of runs in  \(\operatorname{BWT}(w^{\mathrm{rev}}\$)\).
    \end{itemize}

\subsection{Suffixient set}
    For a given string \(w\), the suffixient set is the set of all one-character right-extensions of right-maximal substrings, as defined bellow\footnote{In \cite{depuydt2023suffixient}, an equivalent definition using suffix trees is also presented.}. The following definitions are borrowed from \cite{navarro2025smallest}, adjusting for 0-based indexing when necessary.
    
    \begin{definition}[Right-maximal substrings and right-extensions]
    Let \(w \in \Sigma^*\). A substring \(x\) of \(w\) is \emph{right-maximal} if there exist distinct \(a,b \in \Sigma\) with both \(xa\) and \(xb\) being substrings of \(w\).
    For any right-maximal substring \(x\) of \(w\), substrings of the form \(xa\) for \(a \in \Sigma\) are called \emph{right-extensions} and their set is given by \(E_r(w) = \{xa \, | \, \exists b \neq a: xa \text{ and } xb \text{ occur in } w \}\).
    \end{definition}
    
    \begin{definition}[Super-maximal extensions]
    Let \(w \in \Sigma^*\). The set of \emph{super-maximal extensions} of \(w\) is defined and denoted by \[S_r(w)  = \{x \in E_r(w) \, | \, \forall y \in E_r, y = zx \implies z = \epsilon \}\,.\] Finally, let \(\operatorname{sre}(w) = |S_r(w)|\).
    \end{definition}
    
    In other words, \(S_r\) is the set of right-extensions that are not the suffix of any other right-extension.

    % botar um exemplo
    
    \begin{definition}[Suffixient set]
    Let \(w \in \Sigma^*\). A set \(S\) of indices is suffixient for \(w\) if for every right-extension \(x\) of \(w\), there exists \(j \in S\) such that \(x\) is a suffix of \(w[0..j]\).
    \end{definition}
    
    In our work, we are interested in a smallest suffixient set of a given string \(w\$\), as characterized as follows, with its size being denoted \(\chi\).
    
    \begin{proposition}[Smallest suffixient set]
    Let \(w \in \Sigma^*\). A suffixient set \(S\) is a smallest suffixient set for \(w\) if and only if there is a bijection \(pos : S_r(w) \rightarrow S\) such that every \(x \in S_r(w)\) is a suffix of \(w[0..pos(x)]\).\qed
    \end{proposition}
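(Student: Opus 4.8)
The plan is to show that the size of a smallest suffixient set is exactly \(\operatorname{sre}(w) = |S_r(w)|\), and that the bijection in the statement is precisely the witness of optimality. The whole argument rests on two structural facts about super-maximal extensions, after which both directions of the equivalence follow from a short double-counting (perfect-matching) argument. Throughout I will say that an index \(j\) \emph{covers} a string \(x\) when \(x\) is a suffix of \(w[0..j]\).

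First I would prove a reduction lemma: a set \(S\) of indices is suffixient if and only if every super-maximal extension \(x \in S_r(w)\) is covered by some \(j \in S\). The forward inclusion is immediate since \(S_r(w) \subseteq E_r(w)\). The nontrivial inclusion uses that every right-extension is a suffix of some super-maximal extension: given any \(x \in E_r(w)\), among all right-extensions having \(x\) as a suffix (a finite, nonempty family, since lengths are bounded by \(|w|\)) pick one \(x^{*}\) of maximal length; were \(x^{*}\) a proper suffix of another right-extension, that longer extension would still have \(x\) as a suffix, contradicting maximality, so \(x^{*} \in S_r(w)\). Hence covering \(S_r(w)\) automatically covers all of \(E_r(w)\), because any \(w[0..j]\) ending in \(x^{*}\) also ends in its suffix \(x\).

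The key step --- and the main obstacle --- is the injectivity claim: no single index can cover two distinct super-maximal extensions. Indeed, if \(x \neq x'\) were both covered by the same \(j\), then (being two suffixes of the single string \(w[0..j]\)) the shorter would be a proper suffix of the longer, say \(x\) a proper suffix of \(x'\); but then \(x\) is a proper suffix of the right-extension \(x'\), contradicting the defining property of \(S_r(w)\). I expect this to be where the care is needed, since it is exactly the place where super-maximality is used, and it is what forces distinct super-maximal extensions to consume distinct indices.

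Finally I would assemble the equivalence. The reduction lemma together with injectivity give, for any suffixient \(S\), a covering relation between \(S_r(w)\) and \(S\) in which every \(x\) has degree at least one and every \(j\) has degree at most one; double counting the edges yields \(|S| \geq |S_r(w)|\), while selecting one covering index per super-maximal extension (automatically distinct, by injectivity) produces a suffixient set of size exactly \(|S_r(w)|\), so \(\chi = |S_r(w)|\). If \(S\) is smallest then \(|S| = |S_r(w)|\), which forces every degree on both sides to equal one, i.e.\ the covering relation is a perfect matching, and this matching is the desired bijection \(pos\). Conversely, any bijection \(pos\) with each \(x\) a suffix of \(w[0..pos(x)]\) witnesses that \(S\) covers \(S_r(w)\) (hence \(S\) is suffixient) and has size \(|S_r(w)| = \chi\), so \(S\) is smallest.
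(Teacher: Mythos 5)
Your proof is correct, and it rests on exactly the two facts that underlie this characterization: every right-extension is a suffix of a super-maximal one (so covering \(S_r(w)\) suffices), and two distinct elements of \(S_r(w)\) cannot both be suffixes of the same prefix \(w[0..j]\) (so distinct indices are forced), from which the counting/matching argument gives both directions. The paper itself states this proposition without proof, importing it from the cited reference on suffixient sets, and your argument is essentially the standard one given there, so there is nothing to flag.
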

    
    \begin{definition}[The measure \(\chi\)]
    Let \(w \in \Sigma^*\) and \(S\) a smallest suffixient set for \(w\$\). Then \(\chi(w) = |S|\).
    \end{definition}

     \begin{example}
        Let \(w = aabaa\). We want to compute \(\chi(w)\). To do so, we need to find the size of a smallest suffixient set for \(w\$\). We start by listing every one-character extension for every substring of \(w\$\) (we list only those that occur).
        
        \begin{alignat*}{3}
            &\epsilon:\{a, b,\$\} &&a:\{aa, ab, a\$\} \qquad &&b:\{ba\} \\
            &aa:\{aab, aa\$\} &&ab:\{aba\} &&ba:\{baa\}\\
            &aab:\{aaba\} &&aba:\{abaa\} &&baa:\{baa\$\}\\
            &aaba:\{aabaa\} &&abaa:\{abaa\$\}&&\\ 
            &aabaa:\{aabaa\$\} \quad
        \end{alignat*}

        From the list, we know that the only substrings with \(2\) or more one-character extensions are \(\{\epsilon, a, aa\}\), therefore, they are right-maximal. As such, the set of right-extensions of \(w\$\) is \(E_r(w\$) = \{a, b, \$, aa, ab, a\$, aab, aa\$\}\).

        Now we identify all super-maximal extensions of \(w\). Since we have the set \(E_r\), we remove the right-extensions that are a proper suffix of another, resulting in the set \(S_r = \{aa, aab, aa\$\}\).

        We can finally compute a smallest suffixient set. Let \(S = \{1,2,5\}\). Observe that there is a bijection between \(S\) and \(S_r(w\$)\) such that each \(x \in S_r(w\$)\) is a suffix of \(w\$[0..i]\) for its corresponding index \(i\). Thus, \(S = \{1,2,5\}\) is a smallest suffixient set and \(\chi(w) = |S| = |S_r| = 3\).
   \end{example}

    Note that, when terminating a string \(w\) and attaining the value of \(\chi(w)\), \(\$\) may create a super-maximal extension.

    We are interested in the bound of \(\chi \leq 2r\) \cite{navarro2025smallest}, since it uses the straightforward \(r\) count, as opposed to the earlier bound of \(\chi \leq 2\bar{r}\) \cite{depuydt2023suffixient}.

\subsection{De Bruijn sequences}
     We now define the de Bruijn sequences, which will be used in our result for \(\sigma = 2\). Let \(k \geq 2\). A \emph{de Bruijn sequence} of order \(k\) over $\Sigma$ is a cycle \(C\) of length \(n=\sigma^k\) for which the set of all cyclic length-\(k\) windows of \(C\) (with indices modulo \(n\)) equals \(\Sigma^k\), with every length-\(k\) word occurring exactly once \cite{lothaire1997}.
    
    Since the bound for \(\chi\) expects a linear and terminated word, we associate two types of strings to a given cycle \(C\), using an arbitrary rotation \(U\) of \(C\). We define the linearization of the de Bruijn sequence, which is \(U_\mathrm{lin}:= U\,U[0..k-2]\) (append the first \(k-1\) characters of \(U\)). Then \(|U_\mathrm{lin}| = \sigma^k + (k-1)\) and every length \(k\) word appears exactly once as a contiguous (non-wrapping) substring. The terminated word is the same as defined previously, append \(\$ \notin \Sigma\) to \(U\), resulting in \(U\$\).
    
    We use ``de Bruijn sequence of order \(k\)'' to refer to the rotation of the \emph{cycle} \(C\) that starts with \(a_{\text{min}}^k\), where \(a_{\text{min}}\) is the smallest symbol in \(\Sigma\), which is well-defined since every word of length \(k\) appears exactly once. 
    
    \begin{example}
        One possible de Bruijn sequence of order \(3\) over \(\sigma = 2\) with \(\Sigma = \{0, 1\}\) is \(U = 00010111\). The set of all cyclic length \(3\) windows equals \[
            \{000, 001, 010, 101, 011, 111, 110, 100\} \,.
        \]
        
        The linearized form of \(U\) is \(U_{lin} = 0001011100\), and its terminated version is \(U\$ = 00010111\$\).
    \end{example}
    
\subsection{From LFSRs to de Bruijn sequences}
    The linear-feedback shift register (LFSR) is our choice of de Bruijn-sequence construction. An LFSR uses a \emph{characteristic polynomial} over a finite field to induce a recurrence that will be used to create an \(m\)-sequence, as defined below \cite{golomb1981shift,lidl1997finite,goresky2012algebraic}. With the cycle-joining method, we arrive at a valid de Bruijn sequence \cite{etzion1984algorithms}.
    
    Since we are interested in \textit{binary} de Bruijn sequences, the characteristic polynomials will be taken over \(\mathbb{F}_2=(\{0,1\},+,\cdot)\).
    
    Let \(C(x)\) be a characteristic polynomial of the form
    \[
        C(x)=x^{k}+c_{k-1}x^{k-1}+\cdots+c_{1}x+c_{0}\in\mathbb{F}_2[x]\,.
    \]
    We say that \(C(x)\) induces the recurrence \(s_{t+k} =\sum_{i=0}^{k-1} c_i\,s_{t+i} \pmod{2}\); equivalently, the recurrence is the XORing of the tapped bits with \(c_i = 1\): \(s_{t+k} =\bigoplus_{i=0}^{k-1} c_i\,s_{t+i}\). At time \(t\), the \emph{state} of the recurrence is the \(k\)-tuple 
    \[
        x_t=(s_t,\ s_{t+1},\ \dots,\ s_{t+k-2},\ s_{t+k-1})\in\{0,1\}^k \,;
    \]
    the next state of the recurrence is 
    \[
        x_{t+1}=(s_{t+1},\ s_{t+2},\ \dots,\ s_{t+k-1}, s_{t+k})\, .
    \]
    Finally, define the successor map \(F(x_0,\ldots,x_{k-1}):=(x_1,\ldots,x_{k-1},\,\bigoplus_{i=0}^{k-1}c_i x_i)\); thus \(x_{t+1}=F(x_t)\).
    This is known as the \emph{Fibonacci form}: we left-shift the sequence and append \(s_{t+k}\) \cite{goresky2012algebraic}.
    
    We are interested in characteristic polynomials \(C(x)\) of degree \(k\) that are \emph{primitive} over \(\mathbb{F}_2\), that is, \(C(x)\) is irreducible and the residue class \(\alpha := x \bmod C(x)\) has multiplicative order \(2^k-1\) in \(\mathbb{F}_2[x]/(C(x))\) \cite{lidl1997finite}. Primitive polynomials coupled with a non-zero seed produce maximum period sequences, called \emph{\(m\)-sequences}, of period \(2^k-1\), with the notable exception being the all-zero state \cite{golomb1981shift}. In fact, all non-zero seeds yield the same \(m\)-sequence up to rotation.
    
    In other words, \(m\)-sequences produce a period that visits every non-zero \(k\)-tuple state exactly once before repeating. Should we use the all-zero seed, the recurrence would lock into the all-zero state. Therefore, we can say that a recurrence induced by a primitive polynomial produces two cycles: one for the \(m\)-sequence (main cycle) and another for the all-zero self-loop (\(0\)th cycle).
    
    To have a true de Bruijn sequence, we employ the \textit{cycle-joining} operation \cite{etzion1984algorithms}. The cycle-joining gives us a way to join the main cycle and the \(0\)th cycle, forming a single cycle of length \(2^k\), a binary de Bruijn sequence.
    
    To cycle join, we first find states from two disjoint cycles that share the same suffix of length \(k-1\), \(b = (b_0, b_1,\dots, b_{k-2})\). Let \(U,V\) be disjoint cycles from our binary recurrence, let \(u = (1, b) \) be a state of \(U\) and let \(v = (0, b)\) be a state of \(V\). Formally, we call \(p = (u,v)\) a \textit{conjugate pair} if they share the suffix tuple \(b\).
    
    With a conjugate pair \(p = (u,v)\) found, we now \textit{join} the cycles. To join them, both states from \(p\) need to \textit{interchange their successors}. Let \(F\) be the successor map. For a conjugate pair \(p = (u,v)\), \(F'(u) = F(v), F'(v) = F(u)\) and \(F'(x) = F(x)\) for all other states \(x\). This modification is local: only the two joined states differ from the original linear recurrence; all other transitions are unchanged.

    \begin{example}
        The recurrence induced by the primitive polynomial \(x^3 + x + 1\) has two cycles. Using the seed \((s_0, s_1, s_2)= (0, 0, 1)\), the main cycle (up to rotation) is \(0010111\) and the \(0\)th cycle is \(000\).
    
        The conjugate pair in our case is \(p = (000, 100)\), since they share the same suffix \(00\). Let us apply the successor swap, \(F'(000) = F(100) = 001\) and \(F'(100) = F(000) = 000\). The resulting de Bruijn sequence of order \(3\) is \(00010111\), which, by definition, has all binary \(3\)-tuples.
    \end{example}

    Finally, we interpret a given period as a word. Let \(n = 2^k\) denote the period after cycle-joining, i.e.~a de Bruijn sequence. We form the string \(w\in\{0,1\}^n\) by setting \(w[i]=s_i\) for \(0 \leq i<n\). We view \(w\) cyclically, i.e.~the \(k\)-length window at a position \(t\) is \(w_t = w[t..t+k-1 \bmod{n}]\), therefore \(w_t = x_t\).

\section{A family of strings with \(\chi \approx 2r\)}\label{sec:general_family}
In this section, we will present a family that gets arbitrarily close to the upper bound of \(\chi \leq 2r\) by \cite{navarro2025smallest} as the size of the alphabet increases.

Let \(\mathcal{K}\) be the \emph{clustered} family of strings, as defined below. We show that such family both maximizes \(\chi\) and minimizes \(r\) when \(\$\)-terminated. A string \(K \in \mathcal{K}\) is defined as 
\[
K := s_{n-1}^{k_{n-1}}\, s_{n-2}^{k_{n-2}}\, \dots\, s_{1}^{k_1}\, s_{0}^{k_0} \,,
\]
with \(s_k \in \Sigma\), \(s_0 < s_0  < \dots  < s_{n-1}\) and \(k_i > 1, \forall i\). That is, elements of \(\mathcal{K}\) are words of equal-letter runs sorted by the inverse lexicographical ordering. 

Now, we show the value of \(r\) attained by words \(K \in \mathcal{K}\).
\begin{lemma}\label{cluster_runs}
    Let \(K \in \mathcal{K}\). When applying the BWT to \(K\$\), it results in \(r = \sigma + 1\) runs.
\end{lemma}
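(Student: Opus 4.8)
The plan is to compute $\operatorname{BWT}(K\$)$ explicitly and to show that it equals $s_0^{k_0}\,s_1^{k_1}\cdots s_{\sigma-1}^{k_{\sigma-1}}\$$, after which the run count $\sigma+1$ is immediate. Write $N=|K\$|=1+\sum_{i=0}^{\sigma-1}k_i$, and let $P_i=\sum_{l>i}k_l$ be the starting position of the block $s_i^{k_i}$ inside $K\$$, so that this block occupies positions $[P_i..P_i+k_i-1]$ and, in particular, $P_{\sigma-1}=0$. I will work from the suffix-array characterization $\operatorname{BWT}(K\$)[i]=K\$[(\operatorname{SA}[i]-1)\bmod N]$, so that the $i$-th BWT symbol is exactly the cyclic predecessor of the $i$-th suffix in sorted order.

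First I would describe the suffix array. Since the $\sigma$ characters are totally ordered and each forms one contiguous block, every proper suffix beginning inside block $s_i$ has first symbol $s_i$; hence, after the unique suffix $\$$ (which sorts first), the suffixes split cleanly into $\sigma$ consecutive groups, one per character, appearing in the order $s_0<s_1<\cdots<s_{\sigma-1}$. Within the group whose suffixes begin with $s_i$, the candidates are $s_i^{\,j}s_{i-1}^{k_{i-1}}\cdots s_0^{k_0}\$$ for $j=1,\dots,k_i$ (the tail reading as just $\$$ when $i=0$). Comparing two such suffixes, the one with fewer leading copies of $s_i$ reaches a strictly smaller symbol sooner (namely $s_{i-1}$, or $\$$ when $i=0$), so these suffixes sort by strictly increasing $j$.

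Next I would read off the cyclic predecessor of each sorted suffix. The suffix $s_i^{\,j}\cdots$ begins at position $P_i+(k_i-j)$, so its predecessor is at position $P_i+(k_i-j)-1$. For $j\le k_i-1$ this position still lies inside block $s_i$, giving predecessor $s_i$; for $j=k_i$ the suffix starts at $P_i$, and the predecessor is the last symbol of block $s_{i+1}$, namely $s_{i+1}$, except when $i=\sigma-1$, where $P_{\sigma-1}=0$ forces a wrap to position $N-1$, i.e.\ to $\$$. Separately, the $\$$-suffix is preceded by the last symbol of $K$, which is $s_0$. Therefore group $i$ contributes the block $s_i^{\,k_i-1}s_{i+1}$ for $0\le i\le\sigma-2$, group $\sigma-1$ contributes $s_{\sigma-1}^{\,k_{\sigma-1}-1}\$$, and the $\$$-row contributes a single leading $s_0$.

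Finally I would concatenate these contributions in sorted order and merge adjacent equal symbols: the leading $s_0$ fuses with the $s_0^{k_0-1}$ of group $0$ to give $s_0^{k_0}$, and for each $i$ the trailing $s_{i+1}$ of group $i$ fuses with the leading $s_{i+1}^{\,k_{i+1}-1}$ of group $i+1$ to give $s_{i+1}^{k_{i+1}}$, yielding $\operatorname{BWT}(K\$)=s_0^{k_0}s_1^{k_1}\cdots s_{\sigma-1}^{k_{\sigma-1}}\$$, a concatenation of $\sigma$ maximal runs of distinct letters followed by the run $\$$, whence $r=\sigma+1$. I expect the main obstacle to be the careful handling of the two boundary cases in the predecessor computation (the wrap-around that feeds $\$$ into group $\sigma-1$, and the $\$$-row that feeds the extra $s_0$), together with verifying that it is precisely these carries that make the otherwise separate group contributions telescope into single runs; the hypothesis $k_i>1$ ensures each group genuinely has an internal $s_i$-predecessor, and the remaining indexing is routine bookkeeping.
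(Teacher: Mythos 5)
Your proposal is correct and follows essentially the same route as the paper's proof: group the suffixes into a $\$$-bucket followed by one bucket per character, order each bucket by the number of leading copies of its character, read off the cyclic predecessors (with the two boundary cases you identify), and observe that the contributions telescope into $s_0^{k_0}s_1^{k_1}\cdots s_{\sigma-1}^{k_{\sigma-1}}\$$ with $\sigma+1$ runs. Your version is merely more explicit about positions via $P_i$ and the suffix-array indexing, but the argument is the same.
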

\begin{proof}
    We will examine the lexicographic ordering of the suffixes of the BWT rotation matrix. This ordering groups suffixes into \emph{buckets}. For the string \(K\$\), we have the single \(\$\)-bucket, then the \(s_0\)-bucket, up until the \(s_{n-1}\)-bucket. Except for the \(\$\)-bucket, every bucket has at least two suffixes.

    For \(j \in \{0,\dots,n-1\}\), each suffix of a \(s_j\)-bucket has a predictable form, it starts with the prefix block \(s_j^{i}\), for some \(1\leq i\le k_j\), and is followed by a strictly smaller symbol \(s_{j-1}\) (\(\$\) when \(j = 0\)). Therefore, we can see that the suffixes of a given \(s_j\)-bucket are ordered by increasing \(i\), i.e.~the shorter prefixes of \(s_j\) come first.

    After describing the ordering of a given bucket, we examine the values for the last column \(L\) of the rotation matrix. First, the \(\$\)-bucket is special, it only has the \(\$\) suffix, as such, \(s_0\) is its preceding character. Consequently, we examine each suffix of every other bucket \(s_j\). For every suffix of a \(s_j\)-bucket, the preceding character is \(s_j\) itself, with the exception of the suffix \(s_j^{k_j}\), whose preceding character is \(s_{j+1}\), or \(\$\) if \(j = n-1\).

    Visually, using the ordering described above, \(L\) is given by
    \[
    L =
    \underbrace{s_0}_{\$\text{-bucket}} \cdot
    \underbrace{s_0^{k_0-1}\ s_1}_{s_0\text{-bucket}} \cdot
    \underbrace{s_1^{k_1-1}\ s_2}_{s_1\text{-bucket}} \cdots
    \underbrace{s_{n-1}^{k_{n-1}-1}\ \$}_{s_{n-1}\text{-bucket}}
    = s_0^{k_0}\, s_1^{k_1}\, \dots \, s_{n-1}^{k_{n-1}}\, \$.
    \]
    
     We conclude that \(L\) has \(\sigma+1\) runs, one for each distinct character of \(K\$\).\qed
\end{proof}

Similarly, we show the size of the smallest suffixient set \(\chi\) for \(K\$\).

\begin{lemma}\label{cluster_suffixient}
    Let \(K \in \mathcal{K}\). A smallest suffixient set for \(K\$\) has size \(\chi = 2\sigma\).
\end{lemma}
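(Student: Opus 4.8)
The plan is to compute $\chi(K)$ directly from its characterization as $\operatorname{sre}(K\$)=|S_r(K\$)|$, so the entire argument reduces to enumerating the super-maximal extensions of $K\$$ and checking that there are exactly $2\sigma$ of them. Throughout I use that the $n$ symbols $s_0<\cdots<s_{n-1}$ are distinct and exhaust the alphabet, so $n=\sigma$.

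The first step, which I expect to be the crux, is to pin down the right-maximal substrings. The structural observation is that $K\$$ is a concatenation of $\sigma$ monochromatic blocks in strictly decreasing symbol order followed by $\$$, so each symbol lives in exactly one block and each adjacent pair $s_{j+1}s_j$ (and $s_0\$$) marks a \emph{unique} boundary. Hence any substring containing two distinct symbols crosses a fixed boundary, occurs exactly once, has a single right-successor, and so cannot be right-maximal; the only candidates are therefore $\epsilon$ and the pure runs $s_j^{\,i}$. I would then analyse successors of $s_j^{\,i}$: reading inside the block of $s_j$ (whose left neighbour is $s_{j+1}$ and right neighbour is $s_{j-1}$, with $\$$ when $j=0$), the run $s_j^{\,i}$ is followed both by $s_j$ and by $s_{j-1}$ exactly when $1\le i\le k_j-1$, whereas the full block $s_j^{\,k_j}$ has the single successor $s_{j-1}$ (or $\$$). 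Thus the right-maximal substrings are precisely $\epsilon$ and the $s_j^{\,i}$ with $1\le i\le k_j-1$.

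Reading off $E_r(K\$)$ is then mechanical: from $\epsilon$ we obtain all single characters $s_0,\dots,s_{n-1},\$$; from each right-maximal run $s_j^{\,i}$ we obtain the two extensions $s_j^{\,i+1}$ and $s_j^{\,i}s_{j-1}$ (with $\$$ in place of $s_{j-1}$ when $j=0$). Grouping by $j$, the ``lengthen-the-run'' family is $s_j^{2},\dots,s_j^{\,k_j}$ and the ``turn-to-the-neighbour'' family is $s_js_{j-1},\dots,s_j^{\,k_j-1}s_{j-1}$. Within each family every element is a proper suffix of the next longer one, so only $s_j^{\,k_j}$ and $s_j^{\,k_j-1}s_{j-1}$ survive the super-maximality pruning. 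The main obstacle is the cross-checking that these survivors are not suffixes of one another or of elements coming from a different $j$: a pure power $s_j^{\,k_j}$ cannot be a suffix of any $s_{j'}^{\,k_{j'}-1}s_{j'-1}$ (whose penultimate symbol differs from its last, using $k_{j'}\ge 2$), a mixed extension cannot sit inside a pure power, and distinct $j$ yield distinct terminal symbols. I also verify that every single character is absorbed: each $s_j$ is a proper suffix of $s_j^{2}\in E_r$ (again $k_j\ge 2$), and $\$$ is a proper suffix of $s_0\$\in E_r$; this is precisely the mechanism by which the endmarker contributes $s_0^{\,k_0-1}\$$ rather than a free-standing $\$$.

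Collecting the survivors gives
\[
S_r(K\$)=\{\,s_j^{\,k_j}: 0\le j\le n-1\,\}\ \cup\ \{\,s_j^{\,k_j-1}s_{j-1}: 1\le j\le n-1\,\}\ \cup\ \{\,s_0^{\,k_0-1}\$\,\},
\]
a union of $\sigma+(\sigma-1)+1$ pairwise distinct strings, so $\chi(K)=|S_r(K\$)|=2\sigma$. The only genuinely delicate points are the uniqueness-of-occurrence argument isolating the right-maximal substrings and the exhaustive suffix checks in the pruning step; the remainder is bookkeeping, and the result dovetails with Lemma~\ref{cluster_runs} to give the advertised ratio $\chi/r=2\sigma/(\sigma+1)$.
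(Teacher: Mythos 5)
Your proof is correct and follows essentially the same route as the paper: enumerate the right-maximal substrings (the pure runs $s_j^{\,i}$ with $1\le i\le k_j-1$, plus $\epsilon$), list their right-extensions, and prune to the two survivors per cluster, giving $2\sigma$. You are in fact somewhat more careful than the paper's own proof, notably in justifying why no substring containing two distinct symbols can be right-maximal and in performing the cross-cluster suffix checks explicitly.
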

\begin{proof}
    To determine the value of \(\chi\), we count the number of super-maximal extensions of \(K\$\). For \(j \in \{0, \dots, n-1\}\), we will look at each \(s_j\) cluster and characterize their super-maximal extensions.

    For a given cluster \(s_j\), we list all right-maximal substrings and their right-extensions. Every substring of the form \(s_j^i\), for \(0 \leq i \leq k_j-1\), has two right-extensions, namely \(s_j^i \, s_j\) and \(s_j^i \, s_{j-1}\). Therefore, every substring of this form is right-maximal, and they are the only right-maximal substrings.
    
    The exceptions are the substrings of the form \(s_j^{k_j}\), such substrings cannot possibly have the character \(s_j\) as a right-extension, therefore, they are not right-maximal.

    Now, we filter the right-extensions and identify those that are super-maximal. We can see that we only have two super-maximal extensions per \(s_j\) cluster. Every other right-extension is either a suffix of \(s_j^{k_j} = s_j^{k_j-1} \, s_j\) or a suffix of \(s_j^{k_j-1} \, s_{j-1}\), with \(s_{j-1} = \$\) if \(j = 0\).

    We conclude that a \(s_j\)-cluster of \(K\$\) has only \textit{two} super-maximal extensions.

    Hence, the smallest suffixient set of \(K\$\) has \(2\) positions per cluster, resulting in a total of \(2\sigma\) positions.\qed
\end{proof}

\begin{example}
    We have the string \(K = 332222111,\, K \in \mathcal{K}\). To count both \(r\) and \(\chi\) we need to terminate it, obtaining \(K\$ = 332222111\$\). We start by examining the rotation matrix of \(K\$\)
    \begin{align*}
        &\$332222111\\
        &1\$33222211\\
        &11\$3322221\\
        &111\$332222\\
        &2111\$33222\\
        &22111\$3322\\
        &222111\$332\\
        &2222111\$33\\
        &32222111\$3\\
        &332222111\$ \,.
    \end{align*}    
    The last column \(L\) is \(111222233\$\), which contains \(4\) equal-letter runs.
    
    Now we list the super-maximal extensions of \(K\$\)
    \[
        S_r = \{33,32,2222,2221,111,11\$\}\, ,
    \] therefore, \(\chi = 6\).
\end{example}

\begin{theorem}
    As the alphabet size \(\sigma\) increases, the ratio \(\chi/r\) for the clustered family \(\mathcal{K}\) of strings approaches two.
\end{theorem}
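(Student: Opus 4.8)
The plan is to treat this theorem as a direct corollary of the two preceding lemmas, since they already pin down both quantities exactly for every member of the family. First I would invoke Lemma~\ref{cluster_suffixient} to substitute \(\chi = 2\sigma\) and Lemma~\ref{cluster_runs} to substitute \(r = \sigma + 1\), obtaining the closed form
\[
    \frac{\chi}{r} = \frac{2\sigma}{\sigma + 1}\,.
\]
The entire content of the theorem then reduces to analyzing this elementary rational expression as \(\sigma \to \infty\).

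Next I would rewrite the ratio as \(2 / (1 + 1/\sigma)\) to make the limit transparent: as \(\sigma\) grows without bound, \(1/\sigma \to 0\), so the denominator tends to \(1\) and the whole expression tends to \(2\). This establishes the claimed convergence of \(\chi/r\) to two.

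To reinforce the interpretation as \emph{near-tightness}, I would also observe that the ratio is strictly increasing in \(\sigma\) yet bounded strictly below \(2\): writing \(2\sigma/(\sigma+1) = 2 - 2/(\sigma+1)\), the gap to the theoretical bound is exactly \(2/(\sigma+1)\), which is positive for every finite \(\sigma\) and shrinks monotonically to \(0\). Thus no finite member of \(\mathcal{K}\) attains the bound, but the family approaches it arbitrarily closely, which is precisely the sense of asymptotic tightness advertised in the introduction.

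I do not expect any genuine obstacle here, since the two lemmas do all the combinatorial work. The only subtlety worth flagging explicitly is that the family \(\mathcal{K}\) uses \(n = \sigma\) distinct run-symbols, so that increasing \(\sigma\) genuinely means adding one more cluster while both lemmas remain applicable at every alphabet size; everything else is an immediate substitution followed by an elementary limit.
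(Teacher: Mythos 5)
Your proposal is correct and follows exactly the paper's own argument: substitute \(\chi = 2\sigma\) and \(r = \sigma+1\) from the two lemmas and take the limit of \(2\sigma/(\sigma+1)\) as \(\sigma\to\infty\). The additional remarks on monotonicity and the exact gap \(2/(\sigma+1)\) are accurate but not needed.
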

\begin{proof}
    
    From Lemmas~\ref{cluster_runs} and \ref{cluster_suffixient},
    \begin{equation*}
        \lim_{\sigma \rightarrow \infty }{\frac{\chi}{r}} = 
        \lim_{\sigma \rightarrow \infty }\frac{2\sigma}{\sigma + 1} = 
        2\,.
        \eqno \qed
    \end{equation*}
    
\end{proof}

\section{Tightening the gap for binary \(\sigma\)}\label{sec:binary_family}

In the previous section, we described a family of strings that reaches the proposed \(\chi \leq 2r\) bound for large values of \(\sigma\). However, for small values, we cannot reach the factor of two runs per suffixient set position. For example, if we take \(\sigma = 2\), strings of the family \(\mathcal{K}\) have a bound of \(4/3\), way below the desired \(2r\).

A natural follow-up question is to ask whether a construction for smaller \(\sigma\) values exists. In this section, we partially address this question by presenting a family of strings with \(\sigma = 2\) that also approaches the upper bound as the string length increases.

We will use the fact that linearized binary de Bruijn sequences \textit{maximize} the number of indices in the suffixient set \cite{navarro2025smallest}. For such words, the suffixient set size remains the same up to rotation, reversal, and complement, therefore, we have more flexibility when optimizing for the number of runs \(r\).

To describe which de Bruijn sequences can \textit{minimize} \(r\), we use \mbox{LFSRs} induced by primitive polynomials of a specific form. Once we have such sequences, their cBWT matches the result from \cite{mantaci2017burrows}, i.e.~the cBWT achieves the run-minimal pattern for binary de Bruijn sequences.

The final step is to both linearize and terminate the de Bruijn sequences obtained. We show how to count the number of runs in the BWT, reaching the bound of \(\chi / r = (2^k+1)/(2^{k-1}+4)\). 

\subsection{From Polynomials to LFSRs}\label{subsec:trinomials}

We begin by specifying which polynomials are used to induce the LFSR; we will work with polynomials \(T_k \in \mathcal{T}\), for \(k \geq 2\), of the form 
\[
   T_k(x) = x^k + x + 1
\]
that are primitive over \(\mathbb{F}_2\). The recurrence induced by such polynomials is 
\[
    s_{t+k} = s_{t+1} \oplus s_t\,.
\]

The degrees \(k\) for which \(T_k\) is primitive are exactly those listed in OEIS A073639 \cite{OEIS:A073639}. To the best of our knowledge, there is no proof that this sequence is infinite.

%For our construction, we will use the particular non-zero seed \(\{s_0 = 0, s_1 = 0, \dots, s_{k-2} = 0, s_{k-1} = 1\}\), to aid the visualization. We, however, reiterate that, for primitive polynomials, any other non-zero seed is a rotation of the same de Bruijn sequence.

\subsection{Cycle-joining the 0th cycle and its effects on the sequence} \label{subsec:cycle_joining}

For a non-zero seed, a linear recurrence induced by a primitive polynomial has a period of \(2^k-1\); we are one state short of a true de Bruijn sequence. Hence, we need to amend our construction. In this section, we employ the \textit{cycle-joining method} described in Sec.~\ref{sec:prelim}.

We start by identifying two disjoint cycles. Since the LFSR induced by a primitive polynomial results in exactly two cycles, we use those two; we choose the main and the \(0\)th cycles. 

The next step is to identify the conjugate pair \(p\). The pair \(p\) is composed of one state from each cycle; such states share the same suffix \(b\), with \(|b| = k-1\). Since the \(0\)th cycle is a self-loop of the state \(0^{k}\), we conclude that the target suffix must be \(b = 0^{k-1}\). From \(b\), we identify that the other state is \(10^{k-1}\). Thus, \(p = (0^{k},10^{k-1})\).

Finally, with \(p\) found, we perform the cycle-joining method. We start by stating the original successors for the pair \(p\): \(F(0^k) = 0^k\) and \(F(10^{k-1}) = 0^{k-1}1\). Following the method, we swap them, resulting in the transitions \(F'(0^k) = 0^{k-1}1\), \(F'(10^{k-1}) = 0^k\), \(F'(x) = F(x)\), for every other state \(x\). The overall result is a sequence of period \(2^k\), a de Bruijn sequence \cite{etzion1984algorithms}.

Before we proceed, we need to patch up the recurrence accordingly. For the identified conjugate pair \(p = (u,v) = (0^k, 10^{k-1})\), we complement their resulting states in the recurrence, which is equivalent to XORing them by \(1\):

\[
    s'_{t+k} = s_{t+1} \oplus s_t \oplus (1_{x_t=u} \oplus 1_{x_t=v}) \,,
\]
where \(1_{\{\text{statement}\}}\) is the indicator function, which \(1\) if the statement is true, and \(0\) otherwise.

From now on, we denote \((1_{x_t=u} \oplus 1_{x_t=v})\) simply by \(\omega\).

\begin{example}
    For \(k = 3\) and seed \((s_0, s_1, s_2) = (0,0,1)\), the altered recurrence \(s'\) produces the \(2^3\) state cycle \(00010111\), which is a valid de Bruijn sequence of order \(3\).
\end{example}

When applying the cycle-joining method, we make sure to remember which states compose the conjugate pair \(p\). In particular, our construction always joins the same states, \(u = 0^k\) and \(v = 10^{k-1}\).

\subsection{Transforming the sequence}\label{subsec:transformation}
Finalizing the de Bruijn sequence construction, we aim to achieve the run-minimal pattern in the cBWT of \(1(0011)^{(2^{k-2}-1)}010\) \cite{mantaci2017burrows}. To achieve this, we apply operations that do not change the de Bruijn property, but do alter the cBWT: we reverse and complement the generated de Bruijn sequences. The resulting set of sequences is \emph{the run-minimal family} of cyclic words \(\mathcal{M}\). Elements \(M_k \in \mathcal{M}\) are indexed by the same \(k\) as elements \(T_k \in \mathcal{T}\).

We start by exposing how the operations alter the underlying recurrence for any LFSR. Let be \(C\) be a primitive polynomial and let \(s_{t+k} =\bigoplus_{i=0}^{k-1} c_i\,s_{t+i}\) be its recurrence.

To analyze the reversal operation, we look at the \emph{reciprocal polynomial} of \(C\), denoted \(C^*\), since it preserves primitivity \cite{goresky2012algebraic}. Let \(C^*(x) = x^kC(1/x) = 1 + c_{k-1}x + \ldots + c_1 x^{k-1} + c_0 x^k\). The effect is the mirroring of the tapped bits
\[
    r_{t+k} = \bigoplus_{i = 0}^{k-1} c_{k-1-i}r_{t+i}\,.
\]

Next, we examine how the complement alters the underlying recurrence. Complementing each element \(s_t\), since we are working on \(\mathbb{F}_2\), is the same as XORing each element by \(1\). Let \(p_t := s_t \oplus 1\), then

\[
    p_{t+k} = \bigoplus_{i = 0}^{k-1} c_ip_{t+i} \ \oplus \ \bigoplus_{i=0}^{k-1}c_i \ \oplus \ 1 \,.
\]

The overall effect on the recurrence is decided by the parity of tapped bits. If the parity is even, we XOR the right side by \(1\), otherwise the recurrence does not change \cite{golomb1981shift}.

Finally, we show how our particular recurrence \(s_{t+k} = s_{t+1} \oplus s_t\) changes with these operations.

First, we apply the reverse operation. The reciprocal polynomial of \(T_k \in \mathcal{T}\) is 
\[
    T^*_k(x) = x^k T_k(1/x) = x^k(x^{-k}+x^{-1} + 1) = 1 + x^{k-1} + x^k \,,
\]
which results in the recurrence \(s_{t+k} = s_{t+k-1} \oplus s_t \oplus \omega\).

The next step is to complement it. Since the reversal does not alter the number of taps, the recurrence still has an even number of them. Therefore, we can XOR the reversed recurrence by \(1\), resulting in

\[
    s_{t + k} = s_{t+k-1} \oplus s_t \oplus 1 \oplus \omega \,.
\]

Note that the conjugated pair \(p\) is altered as well. State \(0^k\) changes into \(1^k\) and state \(10^{k-1}\) changes into \(1^{k-1}0\). Therefore \(\omega = (1_{x_t = 1^k} \oplus 1_{x_t=1^{k-1}0})\).

\subsection{Achieving the run-minimal pattern}\label{subsec:run_min_construct}
Finally, we can analyze the cBWT of \(M_k \in \mathcal{M}\). In this section, we find a closed form for the last column \(L\) of the rotation matrix for every \(M_k \in \mathcal{M}\). 

We start by describing a pattern in the rotation matrix of any de Bruijn sequence. Since the number of suffixes of a de Bruijn sequence of order \(k\) is equal to the size of the sequence itself, we have \(2^k\) rows in the rotation matrix. Therefore, we can sort the rotation matrix by only the first \(k\) symbols of each row, i.e.~the matrix is solely ordered by the \(k\)-length states.

Knowing that, we manipulate the recurrence in such a way that we can infer the column \(L\) by the first \(k\) symbols of any line:

\begin{align*}
    s_{t+k} &= s_{t+k-1} \oplus s_{t} \oplus 1 \oplus \omega\\
    s_{t} &= s_{t+k} \oplus s_{t+k-1} \oplus 1 \oplus \omega \,.
\end{align*}

While the recurrence originally determined the current state by looking backwards, its altered form determines the current state by looking \emph{forwards}, which, for the last column \(L\), results in wrapping around, looking at the starting symbols of every row. Furthermore, the values \(s_{t+k-1}\) and \(s_{t+k}\) are the rightmost characters of any \(k\)-length state, therefore they are also the least significant digits of such states. As such, they repeat every \emph{four} rows, as does the resulting column \(L\):

\begin{align*}
    s_{0}\ s_{1} \cdots s_{t+k-2}\ 00& \cdots 1\\
    s_{0}\ s_{1} \cdots s_{t+k-2}\ 01& \cdots 0\\
    s_{0}\ s_{1} \cdots s_{t+k-2}\ 10& \cdots 0\\
    s_{0}\ s_{1} \cdots s_{t+k-2}\ 11& \cdots 1\\
    \vdots&
\end{align*}

From this regularity, we can deduce that the column \(L\) is \((1001)^{2^{k-2}}\). But this is not the final result, since we still need to amend the values from the cycle joined states, i.e.~we need to account for \(\omega\) in the recurrence.

As a reminder, \(1^{k-1}0\) and \(1^k\) are our cycle joined states. These states correspond to the last two rows in our rotation matrix, therefore, when accounting for \(\omega\), we get the following for \(L\):

\begin{alignat*}{3}
    &&\vdots\\
    s_{0}\ s_{1}& \cdots &00&& \cdots 1\\
    s_{0}\ s_{1}& \cdots &01&& \cdots 0\\
    11& \cdots &10&& \cdots \textbf{1}\\
    11& \cdots &11&& \cdots \textbf{0}
\end{alignat*}

We arrive at \(1(0011)^{2^{k-2}-1}010\), the run-minimal pattern for cBWTs \cite{mantaci2017burrows}. Its run count \(r_c\) is \(2^{k-1} + 2\), where the \(+2\) corresponds to the cycle joined states.

\begin{theorem}
    For every \(k\ge 2\) such that \(T_k \in \mathcal{T}\) is primitive over \(\mathbb{F}_2\), let \(\mathcal{M}\) be the family of de Bruijn sequences produced by cycle joining LFSRs induced by \(T_k\) after the reversal and complement operations. Strings \(M_k \in \mathcal{M}\) have a cBWT of \(1(0011)^{2^{k-2}-1}010\), the run-minimal pattern.
\end{theorem}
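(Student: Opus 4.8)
The plan is to show that the altered recurrence, once read ``forwards,'' forces the last column $L$ of the rotation matrix into a fixed period-four pattern, and then to correct that pattern at exactly the two cycle-joined states. First I would invoke the observation already established in Sec.~\ref{subsec:run_min_construct}: because $M_k$ is a de Bruijn sequence of order $k$, every length-$k$ window appears exactly once, so the $2^k$ rows of the rotation matrix are in bijection with $\{0,1\}^k$ and the lexicographic sort is determined entirely by the first $k$ symbols of each row. This reduces the problem to computing, for each state $x_t=(s_t,\dots,s_{t+k-1})$, the preceding symbol, which in cyclic terms is the character that wraps around to become $s_t$ of that rotation.

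Next I would use the rearranged recurrence $s_t = s_{t+k}\oplus s_{t+k-1}\oplus 1\oplus\omega$ to express that preceding symbol. The key point is that, away from the two conjugate states, $\omega=0$, so the entry of $L$ for the row beginning with state $x_t$ depends only on the two \emph{least significant} bits $s_{t+k-1}$ and $s_{t+k}$ of the window. Since the rows are sorted by the full $k$-bit prefix, these two trailing bits cycle through $00,01,10,11$ with period four as we descend the sorted matrix; plugging each pair into $s_{t+k}\oplus s_{t+k-1}\oplus 1$ yields the repeating block $1001$, giving $L=(1001)^{2^{k-2}}$ before any correction. I would present this as a short case check on the four residues of the trailing pair.

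The final step is to account for $\omega$. After the reversal and complement operations, the conjugate pair consists of the states $1^k$ and $1^{k-1}0$, which, being lexicographically largest, occupy the last two rows of the sorted matrix. For exactly these two rows $\omega=1$, so their $L$-entries are complemented relative to the unperturbed pattern: the computed $\dots 10$ and $\dots 11$ rows flip their output bits from $0,1$ to $1,0$. Substituting this local correction into the tail of $(1001)^{2^{k-2}}$ transforms the last block and produces the closed form $1(0011)^{2^{k-2}-1}010$. I would close by noting that this is precisely the run-minimal cBWT pattern identified in \cite{mantaci2017burrows}, with run count $2^{k-1}+2$.

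I expect the main obstacle to be the bookkeeping around \emph{which} rows the perturbation lands on and how the two flipped bits interact with the boundary of the periodic block. The unperturbed string $(1001)^{2^{k-2}}$ and the target $1(0011)^{2^{k-2}-1}010$ are genuinely different periodic structures, so I must verify carefully that complementing only the two final entries reshuffles the run boundaries globally in the way claimed, rather than merely altering two characters locally; in particular I should confirm that the states $1^k$ and $1^{k-1}0$ really are adjacent in sorted order and really are the last two rows, which relies on the fact that the complement sends the original all-zero-related conjugate pair to the all-ones end of the lexicographic order. Making that adjacency rigorous, and checking the edge case $k=2$ where the exponent $2^{k-2}-1$ vanishes, is where the argument needs the most care.
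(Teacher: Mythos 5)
Your proposal is correct and follows essentially the same route as the paper: Subsection~\ref{subsec:run_min_construct} derives $L$ by reading the reversed-and-complemented recurrence forwards so that each row's $L$-entry depends only on the two trailing bits of its sorted $k$-bit prefix, yielding $(1001)^{2^{k-2}}$, and then flips the last two entries (the rows of the conjugate states $1^{k-1}0$ and $1^k$) to obtain $1(0011)^{2^{k-2}-1}010$. Your worry about a ``global reshuffle'' of run boundaries is unfounded: $(1001)^{2^{k-2}-1}1010$ and $1(0011)^{2^{k-2}-1}010$ are literally the same string under a shift of the block grouping by one position, so only the two final characters change, exactly as in the paper.
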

\begin{proof}
    By Subsections~\ref{subsec:trinomials}, \ref{subsec:cycle_joining}, \ref{subsec:transformation}, and \ref{subsec:run_min_construct}.\qed
\end{proof}

%\begin{corollary}
%    When applying the cBWT to the run-minimal family \(\mathcal{M}\), we achieve the run-minimal pattern \(1(0011)^{2^{k-2}-1}010\).
%\end{corollary}

\subsection{The case for linearization and \$-terminated strings}
Aiming to bring \(M_k \in\mathcal{M}\) into the suffixient set domain, we both linearize and terminate them. After both operations, we count the number of runs \(r\) and the size of the smallest suffixient set \(\chi\), reaching our central result of \(\chi/r = (2^k + 1)(2^{k-1} + 4)\).

We start by choosing a specific rotation \(U\) of \(M_k \in \mathcal{M}\). We select the rotation starting with \(0^k\). The linearization and termination process is: we start with the rotation \(U = 0^k\dots1^k\), then we linearize and terminate it, resulting in the string \(U_\mathrm{lin}\$ = 0^k \dots 1^k0^{k-1}\$\).

The following lemma helps us understand how the operations affect the BWT.

\begin{lemma}\label{lemma:bwt_runs}
    Let \(U\) a be a Bruijn sequence of order \(k\), with \(U\) being a rotation of \(M_k \in \mathcal{M}\). When we linearize and terminate \(U\), the BWT of \(U_\mathrm{lin}\$\) is \(0^{k-1}1\$(0011)^{2^{k-2}-1}010\). And its number of runs is \(r = 2^{k-1} + 4\).
\end{lemma}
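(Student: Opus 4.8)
The plan is to reduce the terminated BWT to the already-established cBWT of $M_k$ by a careful suffix-array analysis, tracking exactly the few positions where the two objects differ. Write $n = 2^k$ and $N = n + k = |U_{\mathrm{lin}}\$|$. First I would classify the $N$ suffixes of $U_{\mathrm{lin}}\$$ into two groups: the \emph{tail} suffixes, namely $\$$ together with $0^j\$$ for $1 \le j \le k-1$ (these live inside the appended block $U[0..k-2]=0^{k-1}$ and the endmarker), and the \emph{main} suffixes starting at positions $0,\dots,n-1$. The structural fact that drives everything is that linearization was defined precisely so that the non-wrapping length-$k$ windows of $U_{\mathrm{lin}}$ at positions $0,\dots,n-1$ are exactly the $2^k$ cyclic windows of $U$, each once; hence the first $k$ characters of the main suffixes are pairwise distinct and exhaust $\{0,1\}^k$.

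Second, I would pin down the sorted order. Since $\$ < 0 < 1$ and every main suffix ranges over $\{0,1\}$ on its first $k$ symbols, a short case check shows each tail suffix $0^j\$$ precedes every main suffix (the endmarker forces it below even the all-zeros prefix $0^k$), so the tail block occupies rows $0,\dots,k-1$ in the order $\$ < 0\$ < \cdots < 0^{k-1}\$$. Because the main suffixes are separated by their length-$k$ prefixes, they sort exactly by those windows, i.e.\ in the same order as the cyclic rotations of $U$; thus rows $k,\dots,N-1$ reproduce the cBWT row order, with row $k$ being the lexicographically smallest rotation, the one with prefix $0^k$.

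Third, I would read off the characters via $\operatorname{BWT}(U_{\mathrm{lin}}\$)[i] = U_{\mathrm{lin}}\$[(\operatorname{SA}[i]-1)\bmod N]$. For the tail rows the predecessors lie inside the appended $0^{k-1}$, except the predecessor of $0^{k-1}\$$ (at position $n$), which is $U[n-1]=1$; this equality holds because $0^k$ occurs only at the start of $U$, forcing the cyclic predecessor of position $0$ to be $1$. This gives $0^{k-1}1$ on rows $0,\dots,k-1$. For the main rows, the predecessor of a suffix at position $p\ge 1$ is $U[p-1]$, exactly the cyclic predecessor, so these characters coincide with the cBWT; the single exception is $p=0$, whose linear predecessor wraps to the endmarker $\$$ rather than the cyclic $1$. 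Splicing the $\$$ at row $k$ and appending the remaining cBWT characters (the cBWT $1(0011)^{2^{k-2}-1}010$ with its leading $1$ dropped, since that $1$ was precisely the predecessor of the all-zeros rotation) yields $0^{k-1}1\,\$\,(0011)^{2^{k-2}-1}010$.

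Finally, I would count runs by concatenating the pieces: $0^{k-1}$ (one run), then $1$, then $\$$, then $(0011)^{2^{k-2}-1}010$, which contributes $2(2^{k-2}-1)+3 = 2^{k-1}+1$ runs, for a total of $2^{k-1}+4$. I expect the main obstacle to be the boundary bookkeeping: proving rigorously that every tail suffix sorts strictly below all main suffixes, and that the only character change relative to the cBWT is the lone $1 \to \$$ substitution at the all-zeros rotation. Everything else follows from the de Bruijn window-uniqueness already exploited for the cBWT, so once those two boundary claims are nailed down the run count is immediate.
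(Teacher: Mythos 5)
Your proposal is correct and follows essentially the same route as the paper's proof: split the rows into the $\$$-bearing tail suffixes (contributing the prefix $0^{k-1}1$ of $L$) and the main suffixes, which inherit the cBWT order because their length-$k$ prefixes are pairwise distinct, with the single $1\to\$$ substitution at the all-zeros rotation. Your version is in fact somewhat more careful than the paper's on the two boundary claims (why $U[2^k-1]=1$ and why every tail suffix sorts below every main suffix), but the decomposition, the key observations, and the run count are the same.
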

\begin{proof}
    Let \(L\) be the last column of the rotation matrix of \(U_\mathrm{lin}\$\). We will divide the proof into two parts, the \(k\)-length suffixes containing \(\$\) and those without it.
    
    As mentioned previously, the rotation matrix of \(U_\mathrm{lin}\$\) corresponds to the ordering of suffixes of \(U_\mathrm{lin}\$\). Therefore, the first suffix is \(\$\), since it is defined to be smaller than every character in \(\Sigma\).
    
    We start by looking at the remaining suffixes containing \(\$\) of length at most \(k\). Such suffixes are of the form \(0^j\$\), with \(1 \leq j \leq k-1\). They form the start of the rotation matrix, and their ordering is: \(0\$, \dots 0^{k-1}\$\). 
    
    With that set, we inspect their preceding characters to form the prefix of \(L\). For every suffix containing \(\$\), with the exception of \(0^{k-1}\$\), the preceding character is \(0\). As for the suffix \(0^{k-1}\$\), we know, by construction of linearization, that the preceding character is \(1\).

    Putting all together, the prefix \(x\) of \(L\) is
    \[
        x = 0^{k-1}1 \,.
    \]

    Now, we move on to the unchanged suffixes. They preserve the ordering of the cBWT of \(M\), since \(\$\) is out of reach for tie breaking. 
    Similarly, the remaining characters of \(L\) are mostly the same, with the only change being the preceding character of the \(0^k\) suffix, which is mapped to \(\$\). We can see it as the \(1\) from the cBWT and the newly inserted \(\$\) changing places in the BWT.

    As such, the suffix \(y\) of \(L\) is 
    \[
        y = \$(0011)^{2^{k-2}-1}010 \,.
    \]

    Concatenating \(x\) and \(y\), we complete \(L\):
    \[
        L = 0^{k-1}1\$(0011)^{2^{k-2}-1}010 \,.
    \]
    
     Finally, having \(L\), we count the number of runs \(r\) in the BWT, which equals to \( 2^{k-1} + 2 + 2 = 2^{k-1} + 4\).\qed
\end{proof}

What is left is to show how the construction affects \(\chi\), the size of a smallest suffixient set. In the following lemma, we prove that \(\chi = 2^k + 1\), using the result from \cite{navarro2025smallest} as a basis. In their paper, the authors show that the \(\operatorname{sre}\) (the number of super-maximal extensions) for linearized and unterminated binary de Bruijn sequences \(w_\mathrm{lin}\) is \(\operatorname{sre}(w_\mathrm{lin}) = 2^k\). We expand their result for \(\$\)-terminated de Bruijn sequences.

\begin{lemma}\label{lemma:suffixient_set_positions}
    The size \(\chi\) of a smallest suffixient set for a linearized and terminated binary de Bruijn sequence \(w_\mathrm{lin}\$\) of order \(k\) is \(2^{k}+1\).
\end{lemma}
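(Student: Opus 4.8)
The plan is to show that \(\$\)-termination adds exactly one super-maximal extension to the \(2^k\) already present in the unterminated case. I would take as starting point the result of \cite{navarro2025smallest} that \(\operatorname{sre}(w_\mathrm{lin}) = 2^k\), and moreover observe (or re-derive) that \(S_r(w_\mathrm{lin})\) is precisely the set of all \(2^k\) binary words of length \(k\): every binary word of length at most \(k-1\) is right-maximal because both of its one-character extensions, being binary words of length at most \(k\), occur in the de Bruijn sequence; hence \(E_r(w_\mathrm{lin})\) consists of all binary substrings of length at most \(k\), and the super-maximal ones are exactly those of length \(k\), since a shorter one is a proper suffix of a length-\((\ell+1)\) extension obtained by prepending a symbol. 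The goal is then to compare \(S_r(w_\mathrm{lin})\) with \(S_r(w_\mathrm{lin}\$)\).

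First I would argue that the purely binary part of \(S_r\) is unchanged. The key fact is that each length-\(k\) binary word occurs exactly once as a contiguous window of \(w_\mathrm{lin}\) and therefore has a single successor (a binary symbol, or \(\$\) for the unique last window). Consequently no binary word of length \(k\) becomes right-maximal in \(w_\mathrm{lin}\$\), so \(E_r(w_\mathrm{lin}\$)\) contains no right-extension of length exceeding \(k\). This shows simultaneously that all \(2^k\) length-\(k\) binary words remain super-maximal, as they could only be proper suffixes of extensions longer than \(k\), which do not exist, and that no binary word of length other than \(k\) is super-maximal. So the binary contribution stays exactly \(2^k\).

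Next I would isolate the super-maximal extensions that end in \(\$\). Since \(\$\) occurs once and only at the end, \(z\$\in E_r(w_\mathrm{lin}\$)\) precisely when \(z\) is a right-maximal suffix of \(w_\mathrm{lin}\), and a \(\$\)-ending extension can only be a proper suffix of another \(\$\)-ending extension. Because the suffixes of \(w_\mathrm{lin}\) form a chain under the suffix order, if \(z_1\) is a shorter right-maximal suffix than \(z_2\) then \(z_1\$\) is a proper suffix of \(z_2\$\); hence the only super-maximal \(\$\)-extension is \(z^*\$\), where \(z^*\) is the \emph{longest} right-maximal suffix of \(w_\mathrm{lin}\). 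I would then identify \(z^*\): by uniqueness of windows, every suffix of length at least \(k\) occurs only at the end and is followed solely by \(\$\), so it is not right-maximal; every suffix of length at most \(k-1\) equals \(0^{j}\), because the convention fixing the rotation to start with \(0^k\) makes the appended block \(0^{k-1}\), and each such \(0^{j}\) is right-maximal since it is followed by \(0\) inside the prefix block and by \(\$\) at the end. Hence \(z^* = 0^{k-1}\), and the unique new super-maximal extension is \(0^{k-1}\$\), giving \(\operatorname{sre}(w_\mathrm{lin}\$) = 2^k + 1 = \chi\).

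The main obstacle I anticipate is the boundary bookkeeping around \(\$\): one must be sure that termination neither promotes any length-\(k\) binary word to right-maximal, which would spawn length-\((k+1)\) extensions and could demote some of the \(2^k\) words, nor contributes more than one new \(\$\)-ending super-maximal extension. Both hinge on the single fact that each length-\(k\) window occurs exactly once, and making this the backbone of the argument is what keeps the two counts cleanly separated at \(2^k\) and \(1\).
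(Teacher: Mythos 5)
Your proposal is correct and follows essentially the same route as the paper: start from \(\operatorname{sre}(w_\mathrm{lin})=2^k\) of \cite{navarro2025smallest} and show that \(\$\)-termination contributes exactly one new super-maximal extension, namely \(s\$\) for the length-\((k-1)\) suffix \(s\) (which you identify as \(0^{k-1}\) under the paper's rotation convention). You are in fact somewhat more careful than the paper's proof, which leaves implicit the checks that no binary word of length \(k\) becomes right-maximal (so the \(2^k\) binary super-maximal extensions are neither demoted nor augmented) and that no other \(\$\)-ending extension is super-maximal; your chain argument for the \(\$\)-ending extensions makes this explicit.
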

\begin{proof}
    Let \(w_\mathrm{lin}\) be a linearized but unterminated binary de Bruijn sequence of order \(k\). From \cite{navarro2025smallest} we know that \(\operatorname{sre}(w_\mathrm{lin}) = 2^k\). 
    
    When we terminate \(w_\mathrm{lin}\), the suffix \(s\) of length \(|s| = k-1\) from \(w_\mathrm{lin}\) forms an extra right-extension, namely \(s\$\). The suffix \(s\) already had two right-extensions, \(s0\) and \(s1\). Therefore \(s\) was already a right-maximal substring. 
    
    All three right-extensions of \(s\) are super-maximal, not occurring anywhere else in the string, with \(s0\) and \(s1\) occurring exactly once because of the de Bruijn property, and \(s\$\) occurring exactly once because \(\$ \notin \Sigma\).
    
    Since both \(s0\) and \(s1\) are already covered, we only need to cover \(s\$\), resulting in a single extra position in the smallest suffixient set. Therefore, \(\chi = 2^k+1\).\qed
    
\end{proof}

%Finally, using Lemmas~\ref{lemma:bwt_runs} and \ref{lemma:suffixient_set_positions}, we can conclude that the ratio of \(\chi\) and \(r\) approaches \(2\) as the size of \(U_{\mathrm{lin}}\$\) increases.

\begin{theorem}
    Let \(k \geq 2 \) be and integer such that \(T_k \in \mathcal{T}\) is a primitive polynomial over \(\mathbb{F}_2\). Let \(U\) be a rotation of \(M_k\) that starts with \(0^k\). The resulting ratio of \(\chi / r\) concerning \(U_\mathrm{lin}\$\) approaches \(2\) as \(k\) increases.
\end{theorem}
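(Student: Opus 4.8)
The plan is to recognize this theorem as a direct corollary of Lemmas~\ref{lemma:bwt_runs} and~\ref{lemma:suffixient_set_positions}, which between them pin down the exact values of \(r\) and \(\chi\) for the construction. The only genuine work is to check that the chosen rotation \(U\) falls under the hypotheses of both lemmas, after which we merely form the ratio and pass to the limit.

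First I would verify the premises. Since \(T_k\) is primitive over \(\mathbb{F}_2\), the cycle-joining of Sec.~\ref{subsec:cycle_joining} together with the reversal and complement of Sec.~\ref{subsec:transformation} produces a bona fide de Bruijn sequence \(M_k \in \mathcal{M}\) of order \(k\). Because every length-\(k\) window occurs exactly once, \(M_k\) contains the window \(0^k\), so the rotation \(U\) starting with \(0^k\) is well-defined and is precisely the rotation used in the linearization \(U_\mathrm{lin}\$ = 0^k \cdots 1^k 0^{k-1}\$\). Hence the hypotheses of Lemma~\ref{lemma:bwt_runs} are met, yielding \(r = 2^{k-1} + 4\). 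Independently, \(U_\mathrm{lin}\$\) is a linearized and terminated binary de Bruijn sequence of order \(k\), so Lemma~\ref{lemma:suffixient_set_positions} applies verbatim and gives \(\chi = 2^k + 1\).

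With both quantities fixed, the ratio is
\[
    \frac{\chi}{r} = \frac{2^k + 1}{2^{k-1} + 4} = \frac{2 + 2^{-(k-1)}}{1 + 4\cdot 2^{-(k-1)}},
\]
where the second equality divides through by \(2^{k-1}\); writing \(y = 2^{k-1}\) this is \((2y+1)/(y+4)\), whose derivative in \(y\) equals \(7/(y+4)^2 > 0\), so the ratio increases monotonically in \(k\) toward its limit \(2\).

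The main obstacle here is not computational but interpretive. The closed form \((2^k+1)/(2^{k-1}+4)\) is a well-defined function of \(k\) for every admissible degree, yet the admissible degrees are exactly those for which \(T_k = x^k + x + 1\) is primitive (OEIS A073639), and whether this set is infinite is open. I would therefore phrase the conclusion as a statement about the closed-form expression itself---strictly increasing toward \(2\)---so that it holds along whatever subsequence of valid degrees the family \(\mathcal{M}\) realizes, independently of the infinitude or density of that index set.
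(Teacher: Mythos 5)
Your proposal is correct and follows essentially the same route as the paper: both cite Lemmas~\ref{lemma:suffixient_set_positions} and~\ref{lemma:bwt_runs} to get \(\chi = 2^k+1\) and \(r = 2^{k-1}+4\), then pass to the limit of the ratio. Your added monotonicity check and the explicit caveat about the possibly finite set of admissible degrees \(k\) (which the paper also flags, in a remark immediately after its proof) are welcome but not a different argument.
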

\begin{proof}
    From Lemmas~\ref{lemma:suffixient_set_positions} and \ref{lemma:bwt_runs} we know that \(\chi = 2^k + 1\) and \(r = 2^{k-1} + 4\). Since the size of \(U_\mathrm{lin}\$\) is directly proportional to \(k\) by construction, if we take the limit of \(\chi / r\) as \(k\) approaches infinity we obtain the bound
    \[
        \lim_{k \rightarrow \infty}{\frac{\chi}{r} = \lim_{k \rightarrow \infty}\frac{2^k + 1}{2^{k-1} + 4} = 2}\,.\eqno \qed
    \]
\end{proof}

Although we take the limit as \(k\) goes to infinity, recall that we do not know if the sequence of origin (the values for \(k\) such that \(T_k\) is primitive) is infinite.

\section{De Bruijn sequences for \(\sigma > 2\)}\label{sec:sigma_geq_two}
In this section, we explain how de Bruijn sequences with \(\sigma > 2\) cannot approach the literature ratio of \(\chi / r \leq 2\); as we will see, when the value of \(\sigma\) increases, the ratio converges to \(1\).

We start by generalizing Lemma 5 of \cite{navarro2025smallest}, which gives us the number of super-maximal extensions for a non-terminated binary string. We observe that the same proof works for arbitrary values of \(\sigma\).

\begin{lemma}
    Let \(w_\mathrm{lin}\) be a linearized \(\sigma\)-ary de Bruijn sequence rotation of order \(k > 0\). The length of \(w_\mathrm{lin}\) is \(\sigma^k+(k-1)\). Then \(\operatorname{sre}(w_\mathrm{lin}) = \sigma^k\). 
\end{lemma}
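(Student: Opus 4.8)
The plan is to show that the super-maximal extensions of \(w_{\mathrm{lin}}\) are exactly the \(\sigma^k\) length-\(k\) words over \(\Sigma\). First I would record the basic combinatorial facts about \(w_{\mathrm{lin}}\) that follow from the de Bruijn property of its linearization: every length-\(k\) word appears exactly once as a non-wrapping contiguous substring, so every length-\((k-1)\) word appears (as the prefix of some length-\(k\) window) and is followed by all \(\sigma\) distinct characters of \(\Sigma\). Conversely, any substring of length \(\ell \geq k\) occurs at most once, since two occurrences would force its length-\(k\) prefix to occur twice, contradicting uniqueness of the windows.

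From these facts I would determine the right-maximal substrings. A length-\((k-1)\) word \(x\) has \(\sigma \geq 2\) distinct following characters, so it is right-maximal and contributes all \(\sigma\) words \(xa\) to \(E_r(w_{\mathrm{lin}})\). As \(x\) ranges over the \(\sigma^{k-1}\) length-\((k-1)\) words, these extensions enumerate all \(\sigma^k\) length-\(k\) words exactly once, since every length-\(k\) word has a unique length-\((k-1)\) prefix; hence \(\Sigma^k \subseteq E_r(w_{\mathrm{lin}})\). On the other hand, a substring of length \(\geq k\) occurs at most once and therefore has at most one following character, so it is never right-maximal. Consequently \(E_r(w_{\mathrm{lin}})\) contains no word longer than \(k\).

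It then remains to isolate the super-maximal extensions from \(E_r(w_{\mathrm{lin}})\). Since no right-extension exceeds length \(k\), every length-\(k\) right-extension is of maximal length and cannot be a proper suffix of another right-extension, so all \(\sigma^k\) length-\(k\) words are super-maximal. For the converse, I would show that any right-extension \(z\) with \(|z| < k\) is disqualified: \(z\) is a proper suffix of some length-\(k\) word (pad \(z\) on the left with arbitrary characters to length \(k\)), and since \emph{every} length-\(k\) word already lies in \(E_r(w_{\mathrm{lin}})\), \(z\) is a proper suffix of a genuine right-extension and hence not super-maximal. Combining both directions yields \(S_r(w_{\mathrm{lin}}) = \Sigma^k\), and therefore \(\operatorname{sre}(w_{\mathrm{lin}}) = \sigma^k\).

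The only step where the binary-to-\(\sigma\)-ary generalization could fail is the claim that each length-\((k-1)\) word admits at least two distinct right-extensions; this is precisely where \(\sigma \geq 2\) is used, and it holds uniformly for all \(\sigma \geq 2\), so nothing in the argument is special to \(\sigma = 2\). I expect the only real care-point to be the boundary behavior of the linearization, namely verifying that appending \(U[0..k-2]\) genuinely realizes every length-\(k\) window as a non-wrapping contiguous substring and that the occurrence counts of length-\((k-1)\) and length-\(k\) words are not distorted at the two ends of \(w_{\mathrm{lin}}\); but this is exactly what the definition of the linearization recalled in the preliminaries guarantees.
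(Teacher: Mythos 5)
Your proposal is correct and follows essentially the same route as the paper's (very terse) proof: every length-$(k-1)$ word is right-maximal because it is followed by all $\sigma$ characters, so all $\sigma^k$ length-$k$ words are right-extensions, and these are exactly the super-maximal ones since no longer substring is right-maximal and every shorter right-extension is a proper suffix of one of them. Your version just spells out the suffix-filtering step and the linearization boundary check that the paper leaves implicit.
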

\begin{proof}[Generalized from the binary case proved in \cite{navarro2025smallest}]
    Every \(k\)-length substring occurs exactly once, therefore every substring \(x \in \Sigma^{k-1}\) is right-maximal and every right-extension \(xc, c\in \Sigma,\) is a super-maximal extension.\qed
\end{proof}

Having the value of \(\operatorname{sre}(w_\mathrm{lin})\), the next step is to find the size of a smallest suffixient set for \(w_\mathrm{lin}\$\). To this end, we generalize Lemma~\ref{lemma:suffixient_set_positions} for arbitrary \(\sigma\) values.

\begin{remark}\label{remark:generalized_suffixient_set_positions}
    Let \(w_\mathrm{lin}\$\) be a linearized and terminated \(\sigma\)-ary de Bruijn sequence of order \(k\). We can use Lemma~\ref{lemma:suffixient_set_positions} to extend the binary case. In the proof, replace \(2\) with an arbitrary \(\sigma\) value; similarly, list all possible right-extensions of a suffix \(s\) of length \(k-1\), one for each \(c \in \Sigma\). This procedure results in \(\sigma^k + 1\) positions in a smallest suffixient set.
\end{remark}

Next, we present a lower bound of the number of runs in the BWT of  \(w_\mathrm{lin}\$\) using a similar argument to Lemma~\ref{lemma:bwt_runs}.

\begin{lemma}\label{lemma:generalized_runs}
    Let \(w_\mathrm{lin}\$\) be a linearized and terminated \(\sigma\)-ary de Bruijn sequence of order \(k\). The lower bound of \(r\) in the BWT of \(w_\mathrm{lin}\$\) is \(\sigma^{k-1} (\sigma - 1) +1 \).
\end{lemma}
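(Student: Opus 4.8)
The plan is to reduce the run count of the full BWT of \(w_\mathrm{lin}\$\) to that of a clean subsequence of BWT entries, mirroring the suffix split used in Lemma~\ref{lemma:bwt_runs}. Write \(n=|w_\mathrm{lin}\$|=\sigma^k+k\) and single out the \emph{proper} suffixes: those starting at positions \(p\in[0..\sigma^k-1]\), whose first \(k\) characters contain no \(\$\) and hence form a genuine non-wrapping \(k\)-mer of the de Bruijn sequence. Since all \(\sigma^k\) such \(k\)-mers are pairwise distinct, these \(\sigma^k\) suffixes are totally ordered by their length-\(k\) prefixes, and in the full suffix order they occur in that same relative order, merely interleaved with the \(k\) remaining suffixes (the suffix \(\$\) and the short suffixes beginning in the duplicated prefix block of the linearization). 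The first thing I would establish is a monotonicity principle: inserting a single character anywhere in a string never decreases its number of runs (a short case analysis on the one adjacency that gets broken). It follows that \(r\) is at least the number of runs of the subsequence \(L'\) built from the proper-suffix entries alone, so it suffices to prove that \(L'\) has at least \(\sigma^{k-1}(\sigma-1)+1\) runs.

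The core step is a block analysis of \(L'\). Grouping the proper suffixes by their length-\((k-1)\) prefix partitions them into \(\sigma^{k-1}\) blocks, each of size exactly \(\sigma\): for a fixed \(x\in\Sigma^{k-1}\) the \(k\)-mers with prefix \(x\) are precisely \(\{xc:c\in\Sigma\}\), so exactly \(\sigma\) proper suffixes carry the prefix \(x\), and being lexicographically consecutive they occupy adjacent slots of \(L'\). I would then show that the \(\sigma\) entries inside a block are \emph{pairwise distinct}. For a proper suffix at position \(p\ge 1\) the entry is the predecessor \(c'=w_\mathrm{lin}[p-1]\), and \(c'x\) is itself a genuine \(k\)-mer (now one \emph{ending} in \(x\)); since each \(k\)-mer occurs once, distinct block members force distinct \(k\)-mers \(c'x\), hence distinct characters \(c'\). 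The single irregular member is \(p=0\), lying in one block only, whose entry is \(\$\); as \(\$\notin\Sigma\) it remains distinct from every \(c'\). Thus each block consists of \(\sigma\) pairwise distinct symbols.

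Counting is then immediate: a block of \(\sigma\) pairwise distinct symbols has \(\sigma-1\) internal adjacent mismatches, so over the \(\sigma^{k-1}\) blocks \(L'\) has at least \(\sigma^{k-1}(\sigma-1)\) breaks, i.e.\ at least \(\sigma^{k-1}(\sigma-1)+1\) runs; by insertion-monotonicity the same bound transfers to \(r\).

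I expect the main obstacle to be the bookkeeping at the seam and at the endmarker rather than the combinatorial heart. Concretely one must verify that (i) discarding the \(k\) non-proper suffixes really leaves the full BWT as a string obtained from \(L'\) by inserting characters, so the monotonicity principle applies, and (ii) the lone block touching position \(0\) still contributes its full \(\sigma-1\) breaks despite the \(\$\)-substitution. Both are settled by the monotonicity lemma together with the fact that \(\$\) differs from every alphabet symbol; once these are in place, the de Bruijn observation — that the \(\sigma\) occurrences of each \((k-1)\)-mer have pairwise distinct predecessors — delivers the bound with slack to spare, consistent with the extra runs seen in the explicit binary computation of Lemma~\ref{lemma:bwt_runs}.
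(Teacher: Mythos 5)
Your proposal is correct and follows essentially the same route as the paper's proof: partition the suffixes whose length-\(k\) prefixes avoid \(\$\) into \(\sigma^{k-1}\) blocks by their \((k-1)\)-prefix, observe that the \(\sigma\) BWT entries within each block are pairwise distinct (yielding \(\sigma-1\) breaks per block, hence \(\sigma^{k-1}(\sigma-1)+1\) runs), and note that reinserting the \(\$\)-containing suffixes can only increase the run count. Your version is in fact somewhat more careful than the paper's, since you explicitly justify both the insertion-monotonicity principle and the pairwise distinctness of predecessors via the uniqueness of the \(k\)-mers \(c'x\), which the paper merely asserts.
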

\begin{proof}
    To count the number of runs \(r\), we examine how the BWT of \(w_\mathrm{lin}\$\) behaves; we look at the last column \(L\) of the rotation matrix. 
    
    We start by looking at the rows independent of \(\$\). For each \(v \in \Sigma^{k-1}\), there are exactly \(\sigma\) suffixes that share the prefix \(v\), and such suffixes form a contiguous interval in the suffix array. In total, there are \(\sigma^{k-1}\) such intervals, one for each prefix \(v\). 

    Each such interval has \(\sigma\) suffixes, and their preceding characters are all pair-wise distinct elements of \(\Sigma\). Therefore, each interval contributes \(\sigma\) runs to the total.

    To minimize the number of runs, we merge each adjacent pair of intervals:

    \[
        \sigma^{k-1} \cdot \sigma - (\sigma^{k-1} - 1) = \sigma^{k-1}(\sigma - 1) + 1 \,.
    \]
    
    When we add back the \(\$\) having prefixes rows, the value of \(r\) increases: either we split the runs, or we maintain them unchanged, therefore:
    
    \[
        r \geq \sigma^{k-1} \cdot \sigma - (\sigma^{k-1} - 1) = \sigma^{k-1}(\sigma - 1) + 1 \,. \eqno \qed
    \]
\end{proof}

Finally, we prove that the ratio \(\chi / r\) for linearized and terminated de Bruijn sequences gets worse as \(\sigma\) increases.

\begin{theorem}
    For a linearized and terminated \(\sigma\)-ary de Bruijn sequence \(w_\mathrm{lin}\$\) of order \(k\), the ratio between the size \(\chi\) of a smallest suffixient set and the number of runs \(r\) in the BWT is bounded by
    \[
        \frac{\chi}{r} < \frac{\sigma}{(\sigma -1)} \,.
    \]
\end{theorem}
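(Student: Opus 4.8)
The plan is to combine the two facts already established for $\sigma$-ary linearized and terminated de Bruijn sequences: the exact value $\chi = \sigma^k + 1$ from Remark~\ref{remark:generalized_suffixient_set_positions}, and the lower bound $r \geq \sigma^{k-1}(\sigma-1)+1$ from Lemma~\ref{lemma:generalized_runs}. First I would note that because $r$ appears in the denominator and we possess only a \emph{lower} bound on it, this lower bound converts into an \emph{upper} bound on the ratio: for $\sigma \geq 2$ all quantities involved are strictly positive, so
\[
\frac{\chi}{r} \leq \frac{\sigma^k + 1}{\sigma^{k-1}(\sigma-1)+1}\,.
\]

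Second, I would reduce the desired inequality $\frac{\sigma^k+1}{\sigma^{k-1}(\sigma-1)+1} < \frac{\sigma}{\sigma-1}$ to a polynomial comparison by cross-multiplying, which is legitimate since both denominators are strictly positive for $\sigma \geq 2$. Expanding the two sides gives
\[
(\sigma^k+1)(\sigma-1) = \sigma^{k+1}-\sigma^k+\sigma-1,
\qquad
\sigma\bigl(\sigma^{k-1}(\sigma-1)+1\bigr)=\sigma^{k+1}-\sigma^k+\sigma\,,
\]
so the left side is exactly $1$ less than the right side, which delivers the strict inequality immediately.

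The only subtlety I would flag is the direction of the inequality when passing from the lower bound on $r$ to an upper bound on $\chi/r$; it is precisely this move that lets the claim follow from a lower bound rather than an exact count of $r$. Beyond that bookkeeping and the routine arithmetic there is no genuine obstacle: the strictness is robust, since the constant gap of $1$ in the cross-multiplied expressions is independent of $k$, so $\chi/r < \sigma/(\sigma-1)$ holds for every admissible $k$ rather than merely asymptotically. I would close by observing that $\sigma/(\sigma-1) \to 1$ as $\sigma \to \infty$, which is exactly the qualitative conclusion the section aims for: de Bruijn sequences cannot approach the ratio $2$ as the alphabet grows.
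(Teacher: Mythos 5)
Your proposal is correct and follows exactly the paper's own route: combine Remark~\ref{remark:generalized_suffixient_set_positions} and Lemma~\ref{lemma:generalized_runs} to bound \(\chi/r\) by \((\sigma^k+1)/(\sigma^{k-1}(\sigma-1)+1)\) and then verify the strict inequality against \(\sigma/(\sigma-1)\). Your explicit cross-multiplication (exhibiting the constant gap of \(1\)) merely spells out the arithmetic the paper leaves implicit.
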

\begin{proof}
    From Remark~\ref{remark:generalized_suffixient_set_positions} and Lemma~\ref{lemma:generalized_runs}, we have the following:
    \[
        \frac{\chi}{r} \leq \frac{\sigma^k + 1}{\sigma^{k-1} (\sigma - 1) + 1} < \frac{\sigma}{\sigma-1} \,. \eqno \qed
    \]
\end{proof}

We conclude that de Bruijn sequences cannot approach the \(\chi \leq 2r\) bound for \(\sigma \geq 3\). 

\section{Final remarks}\label{sec:conclusion}
In our work, we expanded the relationship between the suffixient set and the Burrows--Wheeler Transform. In particular, we presented a general \(\sigma\)-ary family of strings that approaches the known \(\chi \leq 2r\) bound. 

This result, however, is still loose for small values of \(\sigma\). In light of this, we also showed a \(\sigma = 2\) result approaching the upper bound. To achieve it, we constructed a run-minimal family of de Bruijn strings using LFSRs, achieving the run-minimal pattern in the literature, if the sequence of values \(k\) for which the trinomial \(x^k +x + 1\) over \(\mathbb{F}_2\) is primitive (sequence OEIS A073639 \cite{OEIS:A073639}) is infinite.

We conjecture that the run-minimal family of strings \(\mathcal{M}\) is unique: when we apply the \(\$\) position finding algorithm from \cite{giuliani2021dollar} into a run-minimal pattern with any \(k\) that \(x^k + x + 1\) is not primitive over \(\mathbb{F}_2\), we fail to find any valid position for it.

We also showed that the construction for \(\sigma = 2\) cannot be generalized for \(\sigma \geq 3\), the de Bruijn structure gets more complex as \(\sigma\) increases. This infeasibility points towards different constructions for such \(\sigma\).

\bibliographystyle{splncs04}
\bibliography{sources}

\end{document}